\documentclass[conference]{IEEEtran}

\usepackage{braket}
\usepackage{marvosym}
\usepackage{amsfonts}
\usepackage{amssymb}
\usepackage{amsmath}
\usepackage{amsthm}
\usepackage{dcolumn}
\usepackage{bm,bbm}

\newcommand{\cU}{{\mathcal U}}
\newcommand{\cC}{{\mathcal C}}
\newcommand{\cP}{{\mathcal P}}
\newcommand{\cD}{{\mathcal D}}
\newcommand{\cX}{{\mathcal X}}

\newcommand{\cM}{{\mathcal M}}
\newcommand{\cS}{{\mathcal S}}

\newcommand{\cE}{{\mathcal E}}

\newcommand{\hr}{{\mathcal H}}
\newcommand{\cK}{{\mathcal K}}

\newcommand{\cV}{{\mathcal V}}
\newcommand{\cL}{{\mathcal L}}

\newcommand{\cT}{{\mathcal T}}

\newcommand{\op}{\cC^{\downarrow}}

\newcommand{\pr}{\mathfrak{P}}

\newcommand{\fS}{{\mathfrak S}}

\newcommand{\bbR}{{\mathbb R}}
\newcommand{\bbN}{{\mathbb N}}

\newcommand{\bfS}{{\mathbf S}}


\newcommand{\eins}{\mathbbm 1}

\newcommand{\id}{\mathrm{id}}

\newcommand{\tr}{\mathrm{tr}}

\newcommand{\supp}{\mathrm{supp}}

\newcommand{\sr}{\mathrm{sr}}
\newcommand{\conv}{\mathrm{conv}}

\newtheorem{theorem}{Theorem}

\newtheorem{definition}[theorem]{Definition}
\newtheorem{example}[theorem]{Example}

\newtheorem{lemma}[theorem]{Lemma}

\title{Resource Cost Results for Entanglement Distillation and State Merging under Source Uncertainties}

\author{\IEEEauthorblockN{Holger Boche and Gisbert Jan\ss en}\vspace{0.3cm}
\IEEEauthorblockA{Lehrstuhl f\"{u}r Theoretische Informationstechnik\\Technische Universit\"{a}t
M\"unchen, Germany\\Email:\{boche, gisbert.janssen\}@tum.de}}

\begin{document}
\maketitle

\begin{abstract}
We introduce one-way LOCC protocols for quantum state merging for compound sources, which have asymptotically 
optimal entanglement as well as classical communication resource costs. 
For the arbitrarily varying quantum source (AVQS) model, we determine the one-way entanglement 
distillation capacity, where we utilize the robustification and elimination techniques, well-known from
classical as well as quantum channel coding under assumption of arbitrarily varying noise.
Investigating quantum state merging for AVQS, we demonstrate by example, that the usual robustification
procedure leads to suboptimal resource costs in this case.
\end{abstract}

\section{Introduction}\label{sect:introduction}
Communication tasks on two-party quantum sources have been
investigated with extensive results. Especially protocols restricted to local operations and classical 
communication (LOCC) and potential use of pure entanglement as communication resource are of 
special interest for quantum communication as well as entanglement theory.\newline
Quantum state merging and entanglement distillation, two prominent instances within this paradigm, are
considered in this work. For the asymptotic scenario, where large blocklengths are considered, optimal 
resource cost results for i.i.d. quantum sources with perfectly known bipartite density matrix $\rho_{AB}$ 
have been determined in \cite{devetak05c} for entanglement distillation and in \cite{horodecki07} for 
quantum state merging. Generalizations of these results to the
compound source model, where the source describing density matrix is not perfectly known, but only identified
as a member of a set $\cX$ of states, were partly given in \cite{bjelakovic13}.
While the optimal asymptotic entanglement cost of one-way state merging for compound sources was determined 
in 
\cite{bjelakovic13}, the classical side communication cost of the protocols introduced there was suboptimal in 
general. The present work contributes protocols which are optimal regarding the entanglement as well as 
classical cost. \newline
We mention here, that it seems tractable, to combine techniques from \cite{bjelakovic13} with one-shot 
results for quantum state merging given in \cite{berta09} to establish optimal universal protocols for quantum 
state merging of compound sources also in the regime of finite blocklengths.\newline
From the communication perspective, it is highly desirable, to consider these protocols under more general 
source scenarios.
In this work, we consider the AVQS source model, where the source density matrix is allowed to vary from 
output to output in an arbitrary manner over a set $\cX$ of possible states. This source
model might imagined as result of a powerful communication attack, where an adversarial party is allowed to 
choose any
state from $\cX$ for each output of the source, forcing the communication parties to perform protocols which 
simultaneously work well for each possible output sequence. \newline 
Communication settings with arbitrarily varying channels (AVCs) and sources were first investigated in classical 
Shannon theory, where the famous robustification and elimination techniques introduced by Ahlswede 
\cite{ahlswede78,ahlswede80} were demonstrated to be useful. Considering message transmission under 
the average error criterion, the mentioned techniques allow to derive asymptotically errorless coding schemes 
for a given AVC from coding schemes with exponentially decreasing error for a certain compound channel. 
Concerning channel coding scenarios 
assuming arbitrarily varying quantum channels, 
coding theorems were shown in e.g. in \cite{ahlswede07, ahlswede13}. \newline
In this work, we utilize the robustification and elimination techniques to determine the optimal entanglement
rates for one-way entanglement distillation, and therefore generalize results from \cite{devetak05c} and 
\cite{bjelakovic13} to the AVQS scenario. 
We also consider quantum state merging for AVQS, and demonstrate, that the robustification 
approach to the arbitrarily 
varying setting is of limited usage in this case. We give an example, which shows, that actually, better 
(i.e. lower) entanglement, as well as classical communication rates are possible, than delivered by the 
robustification-based
approach. \newline
Due to space limitations in this paper, we restrict ourselves to brief proof sketches of the results. The full
arguments and further explanations can be gathered in \cite{boche14} accompanying this work. 
\section{Notation and conventions}
All Hilbert spaces appearing in this work are considered to be finite dimensional complex vector spaces. In our
notation $\mathcal{L}(\hr)$ is the set of linear maps and $\cS(\hr)$ the set of states (density matrices) on a 
Hilbert space $\hr$, while we denote the set of quantum channels (i.e. completely positive (cp) and 
 trace preserving maps) from $\mathcal{L}(\hr)$ to $\mathcal{L}(\cK)$ by $\mathcal{C}(\hr, \cK)$ and the set 
 of trace-nonincreasing cp maps by $\mathcal{C}^{\downarrow}(\hr,\cK)$ for two Hilbert spaces $\hr$, $\cK$. 
 Regarding states on multiparty systems, we freely make use of the following convention for a system consisting
 of some parties $X,Y,Z$, for instance, we denote $\hr_{XYZ} := \hr_{X} \otimes \hr_Y \otimes \hr_Z$, and denote
 the marginals by the letters assigned to subsystems, i.e. $\sigma_{XZ} := \tr_{\hr_Y}(\sigma)$ for $\sigma \in
 \cS(\hr_{XYZ})$ and so on. For a bipartite pure state $\ket{\psi}\bra{\psi}$ on a Hilbert space $\hr_{XY}$,
 we denote its Schmidt rank (i.e. number of nonzero coefficients in the Schmidt representation of $\psi$) by 
 $\sr(\psi)$. We use the definition $F(a,b) := \|\sqrt{a}\sqrt{b}\|_1^2$ for matrices $a,b \geq 0$ (F is the 
 quantum fidelity in case that $a,b$ are density matrices). \newline
 The von Neumann entropy is denoted $S(\cdot)$. The usual notation for entropic 
 quantities extended here to indicate state dependency, we write $I(X;Y,\rho)$ ($I_c(X\rangle Y, \rho)$, 
 $S(X|Y,\rho$))
 for the quantum mutual information (coherent information, conditional entropy) of a bipartite state $\rho$
 shared by parties $X$ and $Y$.
 The protocols we consider are build from one-way LOCC channels, which we define concisely in the following 
 (see also the appendix of \cite{bjelakovic13} and references therein). \newline
 A quantum instrument $\mathcal{T}$ on a Hilbert space $\hr$ is given  by a set 
 $\{\mathcal{T}_k\}_{k=1}^K \subset \mathcal{C}^{\downarrow}(\hr,\cK)$ of trace non-increasing cp maps, 
 such that $\sum_{k=1}^K \mathcal{T}_k$ is  a channel. With bipartite Hilbert spaces $\hr_{AB}$ and 
 $\cK_{AB}$, a channel $\mathcal{N} \in \mathcal{C}(\hr_{AB},\cK_{AB})$ is an $A \rightarrow B$ (one-way) 
 LOCC channel, if it is a combination of an instrument $\{\mathcal{T}_k\}_{k=1}^K \subset 
 \mathcal{C}^{\downarrow} (\hr_A,\cK_A)$ and a family $\{\mathcal{R}_k\}_{k=1}^K \subset 
 \mathcal{C}(\hr_B,\cK_B)$ of channels in the sense,  that it can be written in the form
 \begin{align}
  \mathcal{N}(a) = \sum_{k=1}^K (\mathcal{T}_k \otimes \mathcal{R}_k)(a) && (a \in \mathcal{L}(\hr_{AB})).
  \label{locc_definition}
 \end{align}
 The number of different messages which $A$ has to send to $B$ within the application of $\mathcal{N}$ is 
 $K$ (interchanging parties gives the definition of $B \rightarrow A$ LOCC channels). \newline
 We denote the set of classical probability distributions on a set $\bfS$ by $\pr(\bfS)$. The $l$-fold cartesian
 product of $\bfS$ will be denoted $\bfS^l$ and $s^l := (s_1,...,s_l)$ is the notation for elements of 
 $\bfS^l$. For a natural number $n$, the shortcut $[n]$ is used to abbreviate the set $\{1,...,n\}$.
 For a set $A$ we denote the convex hull of $A$ by $\conv(A)$ and its boundary by $\partial A$. By $\fS_l$, we denote the group of permutations 
 on $l$ elements, 
 in this way $\sigma(s^l) = (s_{\sigma(1)},...,s_{\sigma(l)})$ for each $s^l = (s_1,...,s_l)\in \bfS^l$ and 
 permutation $\sigma \in \fS_l$. For any two nonempty sets $\mathcal{X}$, $\mathcal{X}'$ of states on a Hilbert space $\hr$, the Hausdorff
 distance between $\mathcal{X}$ and $\mathcal{X}'$ (induced by the trace norm $\|\cdot\|_1$) is defined by
 \begin{align*}
  &d_H(\mathcal{X},\mathcal{X}') \\
  &:= \max \left\{ \sup_{\sigma \in \mathcal{X}}\inf_{\sigma' \in \mathcal{X}'}\|\sigma - \sigma'\|_1, 
  \sup_{\sigma' \in \mathcal{X}'}\inf_{\sigma \in \mathcal{X}}\|\sigma - \sigma'\|_1 \right\}.
 \end{align*}
 
\section{Basic definitions} \label{sect:protocol}
Let $\cX = \{\rho_s\}_{s \in \bfS}$ be a set of states on a Hilbert space $\hr$. The \emph{(memoryless) 
compound source generated by} $\cX$ is given by the family $\{ \{\rho_s^{\otimes l}\}_{s \in \bfS} \}_{l \in \bbN}$ of states. 
The \emph{arbitrarily varying source (AVQS) generated by} $\mathcal{X}$ is given by the family 
$\{\{\rho_{s^l} \}_{s^l \in \bfS^l} \}_{l \in \bbN}$, with shortcut definitions 
\begin{align*}
  \rho_{s^l} := \rho_{s_1} \otimes  ... \otimes \rho_{s_l} && (s^l = (s_1,...,s_l) \in \bfS^l).
\end{align*}
We identify compound and AVQ sources with their generating sets and 
write \emph{the compound source} $\mathcal{X}$ and \emph{the AVQS} $\mathcal{X}$.
\newline
A quantum channel $\cM$ is called an \emph{$(l,k_l,D_l)$-$A \rightarrow B$ merging for states on $\hr_{AB}$}
\cite{horodecki07} if it is an $A \rightarrow B$ one-way LOCC 
\begin{align*}
 \mathcal{M}: \cL(\cK_{0,AB}^l \otimes \hr_{AB}^{\otimes l}) \rightarrow 
 \cL(\cK_{1,AB}^l \otimes \hr_{B'B}^{\otimes l}),
\end{align*}
with $D_l$ summands as in defined in (\ref{locc_definition}),
where $\hr_{B'} \simeq \hr_{A}$ is a Hilbert space under control of $B$, and
$\cK_{AB,0}^l, \cK_{AB,1}^l$ are bipartite Hilbert spaces of systems shared by $A$  and $B$. 
These spaces are reserved to carry input and output maximally entangled states $\phi_0^l$ and $\phi_1^l$, 
which we assume to have maximal Schmidt rank, such that  
\begin{align*}
 k_l := \frac{\dim \cK_{A,0}^l}{\dim \cK_{A,1}^l} = 
 \frac{\dim \cK_{B,0}^l}{\dim \cK_{B,1}^l} = \frac{\sr(\phi_0^l)}{\sr(\phi_1^l)}. 
\end{align*}
holds. Given a state $\rho$ on $\hr_{AB}^{\otimes l}$, and an $(l,k_l,D_l)$ $A \rightarrow B$ merging 
$\mathcal{M}_l$,
the measure of fidelity of the protocol applied to $\rho$ is defined
\begin{align}
 F_m(\rho,\mathcal{M}_l) := F\left(\mathcal{M}_l \otimes \id_{\hr_{E}^l}(\phi_0^l \otimes \psi), \phi_1^l 
 \otimes \psi'\right). \label{merging_fid_def}
\end{align}
In (\ref{merging_fid_def}), $\psi$ is any purification of $\rho$ on $\hr_{AB}^{\otimes l} \otimes \hr_{E}^l$
with an additional Hilbert space $\hr_{E}^l$, and $\psi'$ is a version of the state $\psi$ on 
$\hr_{B'B}^{\otimes l} \otimes \hr_E^l$. It was shown in \cite{bjelakovic13}, that the r.h.s. of the equality
(\ref{merging_fid_def}) is independent of the choice of purification.
\begin{definition}
 A real number $R_q$ is called an \emph{achievable entanglement cost} for $A \rightarrow B$ merging 
 of the AVQS  $\cX \subset \cS(\hr_{AB})$ with classical communication rate $R_c$, 
 if there exists a sequence 
 $\{\cM_l\}_{l \in \bbN}$ of $(l,k_l,D_l)$-$A \rightarrow B$-mergings, which fulfills the conditions
 \begin{enumerate}
  \item $\underset{l \rightarrow \infty}{\lim}\ \underset{s^l \in \bfS^l}{\inf}\ F_m(\rho_{s^l},\cM_l) = 1$
  \item $\underset{l \rightarrow \infty}{\limsup}\ \frac{1}{l} \log k_l \leq R_q$
  \item $\underset{l \rightarrow \infty}{\limsup}\ \frac{1}{l} \log D_l \leq R_c.$
   \end{enumerate}
\end{definition}
The corresponding definition regarding achievable entanglement costs for the compound source $\cX$ can 
be easily guessed, where the first condition in the above definition has to be replaced by
\begin{enumerate}
 \item[1')] $\underset{l \rightarrow \infty}{\lim}\ \underset{s\in \bfS}{\inf}\ F_m(\rho_{s}^{\otimes l}
 ,\cM_l) = 1.$
\end{enumerate}
\begin{definition}
 The \emph{$A \rightarrow B$-one-way merging cost} $C_{m,\rightarrow}^{AV}(\cX)$ of the AVQS $\cX$ is 
 defined by
 \begin{align*}
  C_{m,\rightarrow}^{AV}(\cX) := \inf\left\{R_q:  \begin{array}{l}
	      R_q \ \textrm{is achievable entanglement }\\ 
	      \textrm{cost for}\ A \rightarrow B \ \textrm{merging of the} \ \\
	        \textrm{AVQS}\ \cX \ \textrm{with some classical} \\
	        \textrm{communication rate}\ R_c\in \bbR 
	  \end{array} \right\}
 \end{align*}
 
\end{definition}
The \emph{$A \rightarrow B$ merging cost for merging of the compound source $\cX$} is 
 defined analogously and denoted $C_{m,\rightarrow}(\cX)$ \cite{bjelakovic13,boche14}.
 Concerning entanglement distillation, we are interested in the entanglement gain of one-way LOCC distillation 
procedures. 
\begin{definition}
 A non-negative number R is an \emph{achievable $A \rightarrow B$ entanglement distillation rate for the AVQS
 $\cX$ with classical communication rate $R_c$}, 
 if there exists a sequence $\{\cD_l\}_{l \in \bbN}$ of $A \rightarrow B$ LOCC channels, each with a  
 representation as given in (\ref{locc_definition}) with $D_l$ summands,  such 
 that the conditions 
 \begin{enumerate}
  \item $\underset{l \rightarrow \infty}{\lim} \ \inf_{s^l \in \bfS^l} F(\cD_l(\rho_{s^l}), \phi_l) = 1$
  \item $\underset{l \rightarrow \infty}{\liminf} \frac{1}{l} \log \sr(\phi_l) \geq R$
  \item $\underset{l \rightarrow \infty}{\limsup} \frac{1}{l} \log D_l \leq R_c$
 \end{enumerate}
 hold, where $\phi_l$ is a maximally entangled state shared by $A$ and $B$ for each $l \in \bbN$.
\end{definition}
 \begin{definition}
  The \emph{$A \rightarrow B$ entanglement distillation capacity for the AVQS} $\cX$ is defined
   \begin{align*}
    D^{AV}_{\rightarrow}(\mathcal{X}) := 
    \sup\left\{R : \begin{array}{l} R \ \textrm{is achievable}\ A \rightarrow B\ dis- \\ 
			\textrm{tillation rate for the AVQS } \mathcal{X} \\
			  \textrm{with some classical rate}\ R_c\end{array} \right\}.
   \end{align*}
 \end{definition}
 For entanglement distillation again, the definitions in case of a compound quantum source can be easily 
 guessed, and we denote the \emph{one-way entanglement distillation
 capacity for the compound source} $\cX$ by $D_{\rightarrow}(\cX)$. We do not determine optimal classical
 communication rates for entanglement distillation here. These are unknown in general even in the case 
 where the source is i.i.d. with perfectly known state \cite{devetak05c}.
 \section{Quantum State Merging for Compound Sources}
 In this section, we show existence of $A \rightarrow B$ LOCC protocols, which are asymptotically optimal 
 regarding their entanglement as well as classical side communication requirements, due to the converse
 results given in \cite{bjelakovic13}. Optimality is known from the corresponding converse statement given in
 \cite{bjelakovic13}, Section V, where it was shown, that successful one-way merging of a compound source
 $\mathcal{X}$ is not possible with asymptotic classical cost smaller than $\sup_{\rho \in \mathcal{X}}
 I(A;E, \rho)$.
 \begin{theorem} \label{main_th_comp_merg}
 Let $\cX \subset \cS(\hr_{AB})$ be a set of bipartite states. For each $\delta > 0$, 
 $R_q = \sup_{\rho \in \cX} S(A|B,\rho) + \delta$ is an achievable entanglement cost for $A \rightarrow B$ 
 merging of the compound source $\mathcal{X}$ with classical communication rate
 \begin{align}
  R_c = \sup_{\rho \in \cX} I(A;E,\rho) + \delta, \label{comp_opt_cl}
 \end{align}
 where $I(A;E,\rho) = S(\rho_A) + S(A|B,\rho)$ is the quantum mutual information between the $A$ and $E$ 
 systems of a purification of $\rho$ on a larger system with parties $A,B,E$.
\end{theorem}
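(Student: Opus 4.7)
My plan is to adapt the Horodecki--Oppenheim--Winter random-binning merging protocol so that a single one-way LOCC channel $\cM_l$ works simultaneously for every $\rho\in\cX$, while realizing both the optimal entanglement rate $\sup_{\rho}S(A|B,\rho)+\delta$ and the optimal classical rate $\sup_{\rho}I(A;E,\rho)+\delta$. The strategy has the usual three ingredients: (i) reduce to a finite net in $\cX$ and to typical subspaces; (ii) apply a random-unitary/decoupling argument for each net element; (iii) use a union bound to select one unitary that works for all net elements, and pass to the full set $\cX$ by continuity.

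\textbf{Step 1: Finite approximation and typical projections.} First I pick, for each block length $l$, a finite $\tau_l$-net $\cX_{\tau_l}\subset \cX$ in Hausdorff distance with $|\cX_{\tau_l}|$ polynomial in $l$ and $\tau_l\to 0$ slowly enough. Because $S(A|B,\cdot)$ and $I(A;E,\cdot)$ are continuous on $\cS(\hr_{AB})$ (Fannes/Alicki--Fannes type bounds), proving the statement for $\cX_{\tau_l}$ suffices: any $\rho\in\cX$ is trace-close to some $\rho'\in\cX_{\tau_l}$, and the resulting entropy perturbations are absorbed into the $\delta$ slack. On the purification $\psi_\rho$ of each $\rho^{\otimes l}$, I then cut down to typical subspaces on the $A$, $B$, and $E$ systems; on these subspaces the reduced state is close to maximally mixed of dimension $\approx 2^{l S(A,\rho)}$ on Alice's side, and the error incurred is exponentially small in $l$ by standard typical subspace lemmas.

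\textbf{Step 2: Random binning and decoupling.} On Alice's typical subspace I apply a Haar-random unitary $U$ and partition the subspace into $D_l\approx 2^{l(\sup_\rho I(A;E,\rho)+\delta)}$ orthogonal ``bins'' of dimension $k_l\approx 2^{l(\sup_\rho S(A|B,\rho)+\delta)}$; Alice's instrument $\{\cT_k\}_{k=1}^{D_l}$ is the projective measurement into bins, and the announced outcome $k$ gives the classical cost $\log D_l$. A standard decoupling estimate, computing the Haar average of $\|\sigma_{kE}-\tau_k\otimes\rho_E^{\otimes l}\|_1$ where $\sigma_{kE}$ is the subnormalized state in the $k$-th bin, shows that for each fixed $\rho\in\cX_{\tau_l}$ the probability (over $U$) that the decoupling inequality is violated is doubly exponentially small in $l$. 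By Uhlmann's theorem, decoupling from $E$ is equivalent (after Bob's isometric decoder $\cR_k$) to merging with high fidelity, giving the required $F_m(\rho^{\otimes l},\cM_l)\to 1$.

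\textbf{Step 3: Union bound, derandomization, and extension to $\cX$.} Since $|\cX_{\tau_l}|$ is polynomial in $l$ but the failure probability for each net element is doubly exponentially small, a union bound yields a single deterministic $U=U_l$ for which the merging fidelity is uniformly high across the entire net $\cX_{\tau_l}$. Combining with Step~1 via the Hausdorff bound and continuity, the same LOCC channel $\cM_l$ (built from this $U_l$, the instrument $\{\cT_k\}$, and Bob's decoders) achieves $\inf_{\rho\in\cX}F_m(\rho^{\otimes l},\cM_l)\to 1$, while by construction $\frac{1}{l}\log k_l$ and $\frac{1}{l}\log D_l$ converge to the claimed $R_q$ and $R_c$.

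\textbf{Main obstacle.} The delicate point is calibrating the bin count so that $R_c=\sup_\rho I(A;E,\rho)+\delta$, rather than the looser classical rate in \cite{bjelakovic13}. This requires that the decoupling estimate be run on a \emph{universal} typical subspace of dimension $\approx 2^{l\sup_\rho S(A,\rho)}$, and that the number of bins be tied directly to $\sup_\rho I(A;E,\rho)$ rather than to a worst-case combination of several entropic terms. Verifying that this tightened choice still provides decoupling for every $\rho\in\cX_{\tau_l}$ whose $I(A;E,\rho)$ is strictly below the supremum (so that the ``residual'' bin dimension is larger than the individually optimal value, which is harmless for fidelity but must be consistent with the one-shot decoupling bound) is the step that requires the most care, and it is where the present work improves on the earlier protocol.
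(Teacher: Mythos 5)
Your Steps 1 and 3 (polynomial-size net, doubly-exponential concentration of the Haar measure, union bound) are standard and unproblematic, but the point you flag at the end as merely ``the step that requires the most care'' is in fact a place where the construction, as you describe it, fails --- and the paper's proof is organized precisely around circumventing it. With a single Haar-random unitary acting on one universal typical subspace of dimension $\approx 2^{l\sup_\rho S(\rho_A)}$ and a single uniform partition into $D_l$ bins of dimension $L$, the one-shot decoupling bound for a state $\rho$ scales like $\sqrt{L\cdot 2^{-H^{\epsilon}_2(A^l|E^l)_\rho}}$ with $H^{\epsilon}_2(A^l|E^l)_\rho\approx -l\,S(A|B,\rho)$, and this bound is insensitive to the ambient dimension on which the unitary acts. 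Hence decoupling for \emph{every} $\rho\in\cX$ forces $\frac{1}{l}\log L\le -\sup_\rho S(A|B,\rho)$, while covering the typical subspace of a state attaining $\sup_\rho S(\rho_A)$ forces $D_l\cdot L\ge 2^{l\sup_\rho S(\rho_A)}$ up to subexponential factors. Together these give $\frac{1}{l}\log D_l\ge \sup_\rho S(\rho_A)+\sup_\rho S(A|B,\rho)$, which is exactly the suboptimal classical rate of \cite{bjelakovic13} and is in general strictly larger than $\sup_\rho I(A;E,\rho)$, because the two suprema are attained by different states. Conversely, if you insist on $D_l\approx 2^{l\sup_\rho I(A;E,\rho)}$ as in your Step 2, the bins become too large and decoupling fails for any $\rho$ with $\sup_\sigma S(\sigma_A)-S(\rho_A)>\sup_\sigma I(A;E,\sigma)-I(A;E,\rho)$; such states exist whenever the gap between the two rate expressions is nonzero. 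So no single calibration of $(L,D_l)$ on a single universal subspace can deliver both rates simultaneously.

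The missing idea is to make the binning \emph{adaptive}. The paper precedes the merging step by an entropy-estimating instrument $\{\cP^{(i)}\}_{i\in[N]}$ built from the permutation-symmetric (Young-frame) projections $P_{\lambda,l}$, which estimates $S(\rho_A)$ to within a window of width $\eta$ with exponentially small error probability and negligible disturbance of the source state. Conditioned on outcome $i$, one runs a merging protocol of the type of \cite{bjelakovic13} tailored to the slice $\tilde\cX_i$ of states whose $A$-marginal entropy lies in the corresponding (slightly fattened) window; on each slice one has $\sup_{\tilde\cX_i}S(\rho_A)+\sup_{\tilde\cX_i}S(A|B,\rho)\le \sup_{\tilde\cX_i}I(A;E,\rho)+3\eta$, and the total message count $\sum_i\tilde D_l^{(i)}$ still has rate at most $\sup_{\rho\in\cX}I(A;E,\rho)+\frac{\delta}{2}+3\eta$. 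Without this (or an equivalent adaptive device inserted before your random binning) your protocol cannot reach the classical rate claimed in the theorem.
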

 Our proof of Theorem \ref{main_th_comp_merg} has two main ingredients. We use slight generalizations of 
 the results from \cite{bjelakovic13}, Theorems 4 and 6 (see \cite{boche14}) where achievability of the 
 entanglement cost $\sup_{\rho \in \cX} S(A|B,\rho)$ was shown. However, the protocols used there, have 
 classical $A\rightarrow B$ communication requirements of at least 
  $\sup_{\rho \in \cX} S(\rho_A) + \sup_{\rho \in \cX} S(A|B,\rho)$
 which is, in general, strictly greater than the rate given in (\ref{comp_opt_cl}).
 We show, that $R_q$ is achievable with classical communication of rate $R_c$ from (\ref{comp_opt_cl}) 
 by combining the protocols from \cite{bjelakovic13} with a suitably fine-grained entropy estimating 
 instrument on the $A$-system.
 \begin{proof}[Proof of Theorem \ref{main_th_comp_merg}]
 Let $\delta > 0$, $l \in \bbN$, $d:= \dim \hr_A$ and assume $\tilde{s} := \sup_{\rho \in \cX} S(A|B,\rho) 
 - \frac{\delta}{2} < 0$ (the remaining case $\tilde{s} \geq 0$ follows by simple modifications). Consider the sequence
 $s_0 := 0 < s_1 < ... < s_N := \log d$ with $s_i := s_{i-1} + \eta$, $ 1 \leq i < N$, and the intervals
 $I_0 := [s_0,s_1], I_i := (s_{i-1},s_i)$, $1 < i < N$. These generate a decomposition
 $\cX_1,...,\cX_N$ of $\cX$ into pairwise disjoint subsets (some may be empty), defined
 \begin{align*}
  \cX_i := \{\rho \in \cX: \ S(\rho_A) \in I_i\} && (i \in [N]).
 \end{align*}
 We further define sets $\tilde{\cX}_i := \bigcup_{j \in n(i)} \cX_j$,
 where $n(i) := \{j \in [N]: |j-i|\leq 1\}$. We construct an entropy estimating instrument
 $\{\cP^{(i)}\}_{i \in [N]} \subset \op(\hr_A^{\otimes l}, \hr_{A}^{\otimes l})$ by defining
 \begin{align*}
  \cP^{(i)} := p_i (\cdot) p_i, \ \text{with} \hspace{0.2cm} p_i 
  := \sum_{\substack{\lambda: H(\overline{\lambda}) \in I_i}} P_{\lambda,l} &&(i \in [N]),
 \end{align*}
  where $P_{\lambda,l}$ is the projection supported on the invariant subspace of $\hr_{A}^{\otimes l}$ 
  belonging to the representation of $\fS_l$ labeled by Young frame $\lambda$, and $H(\overline{\lambda})$ 
  is the Shannon entropy of the probability distribution given by the normalized box-lengths $\lambda$ 
  \cite{christandl06}. 
  It can be shown (using the bounds from Theorem 1 in \cite{christandl06}, which first appeared in \cite{keyl01}), 
  that our definitions imply for sufficiently large blocklength $l$,
 \begin{align}
  \sum_{j \in [N]\setminus n(i)} \tr(\cP^{(j)}\otimes id_{\hr_B^{\otimes l}}(\rho^{\otimes l}))
   \leq 2^{-lc_2} \label{entropy_est}
 \end{align}
  for each $i \in [N]$ and $\rho \in \cX_i$ with a positive constant $c_2 = c_2(\eta)$. Moreover it is
  known from \cite{bjelakovic13}, Theorem 6 (with some simple modifications, see \cite{boche14}), 
  that for each
  $i$ with $\tilde{\cX}_i \neq \emptyset$ and large enough blocklength, there is a $(l,k_l,
  \tilde{D}^{(i)}_l)-A\rightarrow B$ merging $\tilde{\cM}^{(i)}$ with 
  \begin{align}
   \inf_{\rho \in \tilde{\cX}_i} F_m(\rho^{\otimes l},\tilde{\cM}_l^{(i)}) \geq 1 - 2^{-lc_3} 
   \label{merg_sub_rt}
  \end{align}
  with a positive constant $c_3 > 0$, $k_l \geq 2^{-l\tilde{s}}$ and, for each $i$,
  \begin{align}
   \frac{1}{l} \log \tilde{D}_l^{(i)} 
   &\leq \sup_{\rho \in \tilde{\cX}_i} S(\rho_A) 
   + \sup_{\rho \in \tilde{\cX}_i} S(A|B,\rho) + \frac{\delta}{2} \\
   &\leq \sup_{\rho \in \tilde{\cX}_i} I(A;E,\rho) + \frac{\delta}{2} + 3 \eta. \label{class_bums}
  \end{align}
  Define
  \begin{align*}
   \cM_l := \sum_{i \in [N]} \tilde{\cM}_l^{(i)} \circ (\cP^{(i)} \otimes id_{\hr_B^{\otimes l}}), 
  \end{align*}
  and observe, that $\cM_l$ is an $(l,k_l,D_l)$ $A \rightarrow B$ merging with 
  \begin{align*}
   \frac{1}{l}\log D_l = \frac{1}{l}\log\left(\sum_{i=1}^N D_i\right) 
   \leq \sup_{\rho \in \cX} I(A;E,\rho) + \frac{\delta}{2} + 3 \eta.
  \end{align*}
  Eqns (\ref{entropy_est}), (\ref{merg_sub_rt}) and properties of the merging fidelity imply
  \begin{align}
   \inf_{\rho \in \cX} F_m(\rho^{\otimes l}, \cM_l) \geq 1 - 2^{-l c_4}
  \end{align}
  with a positive constant $c_4$ for large enough blocklength. Since $\eta$ and $\delta$ are free 
  to choose, we are done.
 \end{proof}

\section{One-way Entanglement Distillation for AVQS}
The following theorem determines the capacity for $A \rightarrow B$ one way entanglement distillation in 
presence of an AVQS generated by a set $\cX$ of density matrices. As in several coding theorems for classical
AV channels and sources, the capacity of the AV source $\cX$ equals the capacity of the compound source 
$\overline{\conv(\cX)}$. Notice, that it makes no difference to consider $\conv(\cX)$ instead of its 
closure since these sets have Hausdorff distance zero and the capacity function is continuous 
(see \cite{boche14}).
\begin{theorem} \label{main_th_av_dist}
 Let $\cX \subset \cS(\hr_{AB})$ be a set of bipartite states. It holds
 \begin{align*}
  D_{\rightarrow}^{AV}(\cX) 
  &= D_{\rightarrow}(\conv(\cX)) \\
  &= \lim_{k \rightarrow \infty} \frac{1}{k}\max_{\cT} 
  \inf_{\tau \in \conv(\cX)} D^{(1)}(\tau^{\otimes k},\cT), 
 \end{align*}
 where the maximization is over all instruments with domain $\cL(\hr_X)$, and for each state $\sigma$ on a 
 bipartite space $\hr_{XY}$ and instrument $\cE = \{\cE_j\}_{j=1}^J$, we use 
 the definition 
 \begin{align*}
  D^{(1)}(\sigma,\cE) 
  := \sum_{\substack{\lambda_j(\sigma):\\ \lambda_j \neq 0}} \lambda_j(\sigma) \ I_c(X\rangle Y, \hat{\sigma}_j)
 \end{align*}
 with $\lambda_j(\sigma) := \tr(\cE_j \otimes \id_{\hr_Y}(\sigma))$ and 
 $\hat{\sigma}_j := \lambda_j(\sigma)^{-1} \cE_j \otimes \id_{\hr_Y}(\sigma)$. 
\end{theorem}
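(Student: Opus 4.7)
The plan is to prove the theorem in three pieces, following the Ahlswede programme of robustification and elimination that has been so successful for AV channels and sources in the classical setting. The second equality, $D_{\rightarrow}(\conv(\cX))=\lim_{k\to\infty}\tfrac{1}{k}\max_{\cT}\inf_{\tau\in\conv(\cX)}D^{(1)}(\tau^{\otimes k},\cT)$, is essentially the Devetak--Winter capacity formula extended to compound sources along the lines of \cite{bjelakovic13}; continuity of $D^{(1)}$ in the trace norm, together with the fact that $\conv(\cX)$ and $\overline{\conv(\cX)}$ have Hausdorff distance zero, justifies working with either set interchangeably. So the substance lies in the first equality $D_{\rightarrow}^{AV}(\cX)=D_{\rightarrow}(\conv(\cX))$.

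The direction $D_{\rightarrow}^{AV}(\cX)\leq D_{\rightarrow}(\conv(\cX))$ is immediate from linearity of the fidelity criterion against the pure target $\phi_l$: for $\tau=\sum_{s}p_s\rho_s\in\conv(\cX)$,
\begin{align*}
F(\cD_l(\tau^{\otimes l}),\phi_l)=\sum_{s^l}p^{\otimes l}(s^l)\,F(\cD_l(\rho_{s^l}),\phi_l)\geq\inf_{s^l}F(\cD_l(\rho_{s^l}),\phi_l),
\end{align*}
so any sequence $\{\cD_l\}$ witnessing AV achievability automatically witnesses compound achievability on $\conv(\cX)$ at the same entanglement and classical rates.

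For the reverse inclusion I would first strengthen the compound-source achievability result to guarantee, for every $R<D_{\rightarrow}(\conv(\cX))$ and every sufficiently large $l$, an $A\to B$ LOCC protocol $\cD_l$ with $F(\cD_l(\tau^{\otimes l}),\phi_l)\geq 1-2^{-lc}$ uniformly in $\tau\in\conv(\cX)$ for some $c>0$. Exponential decay of the error is standard for Devetak--Winter style constructions and survives the compound modifications of \cite{bjelakovic13}. Setting $f(s^l):=F(\cD_l(\rho_{s^l}),\phi_l)$, the compound bound gives $\sum_{s^l}p^{\otimes l}(s^l)f(s^l)\geq 1-2^{-lc}$ for every type $p$ on $\bfS$, so Ahlswede's robustification lemma yields $\tfrac{1}{l!}\sum_{\sigma\in\fS_l}f(\sigma(s^l))\geq 1-(l+1)^{|\bfS|}2^{-lc}$ uniformly in $s^l\in\bfS^l$. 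This bound is exactly the fidelity of the symmetrized protocol $\bar{\cD}_l:=\tfrac{1}{l!}\sum_{\sigma\in\fS_l}\cD_l\circ\mathrm{Ad}(U_\sigma^A\otimes U_\sigma^B)$, where $U_\sigma$ permutes the $l$ tensor factors; since this permutation factors as a local unitary on $A$ tensored with a local unitary on $B$, $\bar{\cD}_l$ is again a one-way LOCC channel.

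The final step is elimination of the common randomness: a naive realization of $\bar{\cD}_l$ requires $\log l!$ bits of shared randomness, which would destroy the classical rate. By Ahlswede's elimination argument, a Chernoff bound applied to a uniformly drawn sample of $L=\mathrm{poly}(l)$ permutations, together with a union bound over the $|\bfS|^l$ AV sequences, produces a deterministic subfamily $\sigma_1,\dots,\sigma_L$ such that $\cD_l^{\mathrm{AV}}:=\tfrac{1}{L}\sum_{i=1}^{L}\cD_l\circ\mathrm{Ad}(U_{\sigma_i}^A\otimes U_{\sigma_i}^B)$ retains the uniform exponential fidelity bound on all $\rho_{s^l}$. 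The index $i$ is drawn by $A$ and forwarded to $B$ with $\lceil\log L\rceil=O(\log l)$ classical bits, adding only $o(1)$ to the recorded classical rate. I expect the elimination step to be the most delicate point: one must verify that the index-transmission fits within the one-way LOCC representation \eqref{locc_definition} so that $D_l$ grows only by a polynomial factor, and that the Chernoff--union-bound balance is actually feasible against the exponential decay produced by robustification. The resulting protocols $\cD_l^{\mathrm{AV}}$ then achieve any rate $R<D_{\rightarrow}(\conv(\cX))$ against the AVQS $\cX$, completing the converse inclusion.
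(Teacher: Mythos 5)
Your overall architecture --- reduce to the compound source $\conv(\cX)$ via the second equality, get the easy direction $D_{\rightarrow}^{AV}(\cX)\leq D_{\rightarrow}(\conv(\cX))$ from linearity of the fidelity against the pure target, and get the hard direction by robustification of an exponentially good compound protocol followed by Ahlswede-style elimination of the permutation randomness --- is exactly the paper's strategy for the case of a \emph{finite} generating set $\cX=\{\rho_s\}_{s\in\bfS}$. Your treatment of the elimination step is, if anything, more explicit than the paper's, which defers the Chernoff/union-bound bookkeeping to the companion paper.

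However, there is a genuine gap: the theorem is stated for an arbitrary set $\cX\subset\cS(\hr_{AB})$, possibly infinite or uncountable, and every quantitative step of your achievability argument silently assumes $|\bfS|<\infty$. The robustification lemma's conclusion carries the factor $(l+1)^{|\bfS|}$, which is only harmless when $|\bfS|$ is finite; likewise your elimination step takes a union bound over the $|\bfS|^l$ state sequences, which is vacuous for infinite $\bfS$. The paper closes this gap with a polytope approximation: assuming (after a slight depolarization) that $\overline{\conv(\cX)}$ avoids the boundary of the state space, one chooses a polytope $P_\eta$ with finitely many extreme points $\{\tau_e\}_{e\in E}$ such that $\overline{\conv(\cX)}\subset P_\eta$ and $d_H(\conv(\cX),P_\eta)<\eta$, runs your finite-set argument on the AVQS generated by $\{\tau_e\}_{e\in E}$, and then observes that each $\rho_s$ is a convex combination of the $\tau_e$, so the uniform fidelity bound over $E^l$ transfers to all of $\bfS^l$ by linearity of the fidelity; continuity of $D^{(1)}$ in the Hausdorff distance controls the rate loss as $\eta\to 0$. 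Without some such discretization your proof only establishes the theorem for finite $\cX$. A second, smaller point: you label the two directions as if achievability for the AVQS were the ``converse inclusion''; the inequality $D_{\rightarrow}^{AV}(\cX)\leq D_{\rightarrow}(\conv(\cX))$ is the converse (every AV-good protocol is compound-good), and the robustification/elimination construction is the direct part.
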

 In the proof of Theorem \ref{main_th_av_dist} below, we invoke the robustification technique \cite{ahlswede80}, 
 to generate good entanglement distillation schemes for the AVQS $\cX$ from good protocols for the compound 
 source $\overline{\conv{\cX}}$.
 \begin{lemma}[Robustification technique, cf. \cite{ahlswede80} and Theorem 6 in \cite{ahlswede86}]\label{robustification-technique}
Let $\bfS$ be a set with $|\bfS|<\infty$ and $l\in\bbN$. If a function $f:\bfS^l\to [0,1]$ satisfies
\begin{align}\label{eq:robustification-1}
 \sum_{s^l\in\bfS^l}f(s^l)q(s_1)\cdot\ldots\cdot q(s_l)\geq 1-\gamma
\end{align}
for each type $q$ of sequences in $\bfS^l$ for some $\gamma\in [0,1]$, then
\begin{align}\label{eq:robustification-2}
  \frac{1}{l!}\sum_{\sigma\in \mathfrak{S}_l}f(\sigma(s^l))\ge 1-(l+1)^{|\bfS  |}\cdot \gamma\qquad \forall s^l\in \bfS^l.
\end{align}
\end{lemma}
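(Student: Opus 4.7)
The plan is to reduce the symmetric-group average in (\ref{eq:robustification-2}) to an average of $f$ over a type class of $\bfS^l$, apply the hypothesis (\ref{eq:robustification-1}) with the type of $s^l$ itself as the driving distribution, and finish by the standard method-of-types lower bound on $q^{\otimes l}(T_q)$.

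First I would observe that $\frac{1}{l!}\sum_{\sigma\in\fS_l}f(\sigma(s^l))$ depends only on the type $q$ of $s^l$. Indeed, $\fS_l$ acts on $\bfS^l$ with orbit through $s^l$ equal to the type class $T_q:=\{t^l\in\bfS^l:\text{type}(t^l)=q\}$, and the stabilizer of $s^l$ has size $l!/|T_q|$, so each $t^l\in T_q$ appears $l!/|T_q|$ times in the orbit sum. Hence
\[
a_q \;:=\; \frac{1}{l!}\sum_{\sigma\in\fS_l}f(\sigma(s^l))
\;=\;\frac{1}{|T_q|}\sum_{t^l\in T_q}f(t^l),
\]
and the target inequality (\ref{eq:robustification-2}) reduces to the claim $a_q\geq 1-(l+1)^{|\bfS|}\gamma$ for every type $q$ on $\bfS^l$.

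Next I would specialize (\ref{eq:robustification-1}) to this type $q$. Since $q^{\otimes l}(t^l)$ depends only on $\text{type}(t^l)$, the i.i.d.\ sum decomposes according to type classes as $\sum_{q'}q^{\otimes l}(T_{q'})\,a_{q'}\geq 1-\gamma$. Because $f\in[0,1]$ forces $a_{q'}\in[0,1]$, this is equivalent to $\sum_{q'}q^{\otimes l}(T_{q'})(1-a_{q'})\leq\gamma$. All summands are non-negative, so keeping only the $q'=q$ term yields the single-type bound
\[
q^{\otimes l}(T_q)\,(1-a_q)\;\leq\;\gamma.
\]

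The last ingredient, which is really the only nontrivial one, is the classical estimate $q^{\otimes l}(T_q)\geq (l+1)^{-|\bfS|}$ from the method of types; it follows from the facts that $T_q$ has the largest $q^{\otimes l}$-probability among all type classes and that the number of distinct types on $\bfS^l$ is at most $(l+1)^{|\bfS|}$. Substituting this into the previous display gives $1-a_q\leq (l+1)^{|\bfS|}\gamma$, which is exactly (\ref{eq:robustification-2}). I do not anticipate any genuine obstacle: the argument is a short combinatorial reduction, and the only place one has to be careful is that the constant $(l+1)^{|\bfS|}$ in the conclusion must line up precisely with the reciprocal of the method-of-types bound on $q^{\otimes l}(T_q)$.
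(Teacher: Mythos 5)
Your proof is correct, and it is the standard argument for Ahlswede's robustification technique: the paper itself gives no proof of this lemma but cites \cite{ahlswede80} and Theorem 6 of \cite{ahlswede86}, where essentially the same reduction (permutation average $=$ type-class average, specialization of the hypothesis to the type of $s^l$, and the method-of-types bound $q^{\otimes l}(T_q)\geq (l+1)^{-|\bfS|}$) is used. All three steps check out, including the orbit-stabilizer identification and the fact that you only need the hypothesis for types rather than for all distributions, so nothing further is required.
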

 
\begin{proof}[Proof of Theorem \ref{main_th_av_dist}]
 To show achievability, we first prove the assertion of the theorem for the case of a finite set
 $\mathcal{X}:=\{\rho_s\}_{s\in \bfS}$. We show, that each achievable 
 $A \rightarrow B$ 
 entanglement distillation rate for the compound source $\conv(\cX)$ is also achievable for the AVQS $\cX$
 and use the fact, that
 \begin{align*}
  \conv(\cX) = \left\{\rho_p: \rho_p = \sum_{s \in \bfS} p(s) \rho_s, \ p \in \pr(\cX)\right\}
 \end{align*}
 holds. Assuming, that $R$ is an achievable rate for the compound source $\conv(\cX)$, we know,
 that for each $\delta > 0$ and large enough blocklength $l$, there is an $A \rightarrow B$ LOCC channel
 $\cD_l$, such that for each $p \in \pr(\bfS)$ the fidelity is bounded 
 $F(\cD_l(\rho_p^{\otimes l}), \phi_l) \geq 1 - 2^{-lc_5}$ with a constant $c_5 > 0$ (in the proof of 
 Theorem 8 in \cite{bjelakovic13}, it was shown, that each achievable rate can be achieved by protocols
 with exponentially decreasing error).
 With $f(s^l) := \rho_{s^l}$, and linearity of the fidelity in the first input, we yield
 \begin{align}
  \sum_{s^l} p^l(s^l)\ f(s^l) \geq 1-2^{-lc_5} \hspace{0.2cm} \text{for all} \hspace{0.2cm} p \in \pr(\bfS). 
  \label{rob_condition}
 \end{align}
 Eq. (\ref{rob_condition}) implies, that the function $f$ satisfies the conditions of 
 Lemma \ref{robustification-technique}. Let 
 $\cU_\sigma$ be the (local) unitary channel, which permutes the tensor factors in $\hr_{AB}^{\otimes l}$ 
 according to the permutation $\sigma$, i.e. $\rho_{\sigma(s^l)} = \cU_\sigma(\rho_{s^l})$, and 
 $f(\sigma(s^l)) = F(\cD_l \circ \cU_\sigma(\rho_{s^l}), \phi_l)$. 
 Lemma \ref{robustification-technique} and (\ref{rob_condition}) imply, that 
 \begin{align*}
  1-(l+1)^{|\bfS|}\cdot 2^{-lc_5} 
   \leq \frac{1}{l!} \sum_{\sigma \in l} f(\sigma(s^l))
   = F(\hat{\cD}_l(\rho_{s^l}), \phi_l) 
 \end{align*}
 holds for each $s^l$, where we defined an $A \rightarrow B$ one-way LOCC channel $\hat{\cD}_l$ by 
 $\hat{\cD}_l := (l!)^{-1} \sum_{\sigma \in \fS_l} \cD_l \circ \cU_{\sigma}$. From 
 the above inequality and the polynomial growth of the function $(l+1)^{|\bfS|}$ for 
 $l \rightarrow \infty$, we infer, that $R$ is an achievable rate for one-way entanglement distillation 
 for the AVQS $\cX$ with fidelity going to one exponentially fast. By Theorem 8 in \cite{bjelakovic13} 
 (generalized to the case of infinite compound sources in \cite{boche14}), we can choose any rate $R \geq 0$ 
 with $R \leq D_{\rightarrow}(\conv(\mathcal{X}))$. \newline
 However, the protocols $\{\hat{\cD}_l\}_{l=1}^\infty$ we introduced, 
 are not reasonable regarding the classical side communication cost, since $A$ has to communicate the messages
 required within application of $\tilde{\cD}_l$ \emph{and} the choice of
 permutation $\sigma$ (out of $l!$ possibilities), i.e. the classical 
 communication requirements are not rate-bounded for $l \rightarrow \infty$. However, we can invoke the 
 derandomization 
 technique from (\cite{ahlswede78}) to derive protocols with rate-bounded classical communication 
 (see \cite{boche14} for details). \newline 
 To prove the general case of a not necessary
 finite or countable set $\cX$, we apply a polytope approximation technique similar to the one used in  
 \cite{ahlswede13}. For simplicity, we assume $\overline{\conv(\cX)} \cap \partial\cS(\hr) = \emptyset$ 
 (this condition can be removed by slight depolarization of the states in the set $\cX$). Then, for 
  any small enough number $\eta > 0$, we
  find a polytope $P_\eta$, i.e. the convex hull of a finite set $\{\tau_e\}_{e \in E}$ of extreme points,
  such that $\overline{\conv(\cX)} \subset P_\eta \subset \cS(\hr_{AB}) \setminus \partial \cS(\hr_{AB})$, 
  and 
  \begin{align}
  d_H(\conv(\cX),P_{\eta}) < \eta. \label{av_dist_prf_hausdorff}
  \end{align}
  Applying the argument for finite sets given above to $P_\eta$, we find, for each sufficiently large 
  $l$, a distillation protocol $\hat{\cD}_l$, such that 
  \begin{align}
   \min_{e^l \in E^l} F(\hat{\cD}(\tau_{e^l}), \phi_l) \geq 1 - 2^{-lc_5} \label{av_dist_prf_gen_fid}
  \end{align}
  holds with a maximally entangled state $\phi_l$, such that
  \begin{align}
   \frac{1}{l} \log \sr(\phi_l) \geq \frac{1}{k} \max_{\cT}\inf_{\tau \in P_\eta} D(\tau^{\otimes k}, \mathcal{T}) 
   - \frac{\delta}{2} \label{av_dist_prf_gen_rate}.
  \end{align}
  Since $\rho_s$ can be written as a convex combination of extremal points of $P_\eta$, 
  (\ref{av_dist_prf_gen_fid}) implies 
  \begin{align}
   \inf_{s^l \in \bfS^l} F(\hat{\cD}(\rho_{s^l}), \phi_l) \geq 1 - 2^{-lc_5} \label{av_dist_prf_gen_fid_2}.
  \end{align}
  By continuity properties of the function $D^{(1)}$ (see \cite{boche14} for details), 
  and (\ref{av_dist_prf_hausdorff}) together with a (sufficiently small) choice of the parameter $\eta$, 
  it holds
  \begin{align*}
   \max_{\cT} \inf_{\tau \in P_\eta} D^{(1)}(\tau^{\otimes k}, \mathcal{T}) 
   \geq \max_{\cT}\inf_{\rho \in \conv(\cX)}  D^{(1)}(\rho^{\otimes k},\cT) - \frac{k\delta}{2}
  \end{align*}
  which, together with (\ref{av_dist_prf_gen_fid_2}) and (\ref{av_dist_prf_gen_rate}) gives achievability.
  The converse is obvious, since each entanglement distillation protocol $\cD_l$ which is $\epsilon$-good for
  the AVQS $\cX$ is also $\epsilon$-good for entanglement generation of $\overline{\conv(\cX)}$, so that
  the converse for the compound distillation theorem (\cite{bjelakovic13}, Theorem 8) applies.
  \end{proof}
\section{Quantum State Merging for AVQS}
Regarding the task of one-way quantum state merging, the close connection between the merging cost of an 
AVQS $\cX$ and the merging cost of the compound source generated by $\conv(\cX)$ breaks down.
We demonstrate this by example.
\begin{example} \label{main_ex_avqs_merg}
 There exists a set $\mathcal{X}$, such that
 \begin{align*}
  C_{m,\rightarrow}^{AV}(\cX) < C_{m,\rightarrow}(\conv(\cX)).
 \end{align*}
 \end{example}
 Consider the set $\hat{\cX} := \{\rho_s\}_{s=1}^N\subset \cS(\hr_{AB})$, $N < \infty$, with
 \begin{align}
  \rho_s := (U_s \otimes \eins_{\hr_B}) \rho_1 (U_s^\ast \otimes \eins_{\hr_B})  &&(s \in [N]) 
  \label{merg_ex_prf_st}
 \end{align}
 with $\rho_1 \in \cS(\hr_{AB})$ such that $S(A|B,\rho_1) < 0$, and 
 unitary matrices $U_1 := \eins_{\hr_A}, U_2,...,U_N$, such that the supports of the $A$ marginals of the 
 states in $\hat{\cX}$ are pairwise orthogonal. We assume $\dim \hr_A \geq N \cdot \supp(\rho_{A,1})$.
 These definitions also imply for each $s,s' \in [N], s\neq s'$
 \begin{align}
  \rho_{B,s} = \rho_{B,1},\ \text{and} \hspace{0.3cm}\supp(\rho_s) \perp \supp(\rho_{s'}). 
  \label{merg_ex_prf_set}
 \end{align}
 In the following, we show, that for each set constructed in the above manner, it holds the relation
 $
  C_{m,\rightarrow}^{AV}(\hat{\cX}) \leq C_{m,\rightarrow}(\hat{\cX}) - \log N
 $
 for the one-way merging cost. Moreover, each achievable entanglement cost can be achieved with classical 
 communication rate $R_c$ such that
 $
  R_c \leq \sup_{\rho \in \conv(\hat{\cX})} I(A;E,\rho) - \log N
 $
 holds. From the orthogonality conditions (\ref{merg_ex_prf_st}) follows, that there is an instrument 
 $\{\tilde{\cV}_s\}_{s=1}^N$ on $A$'s system, such that 
 \begin{align}
  \cV_{s'}(\rho_s) := \tilde{\cV}_{s'}\otimes \id_{\hr_B}(\rho_s) = \delta_{ss'}\rho_1 \label{merg_ex_prf_instr}
 \end{align}
 holds for each $s \in [N]$. Since $C_{m,\rightarrow}(\rho_1) = S(A|B,\rho_1)$ \cite{horodecki07}, we find
 for each $\delta > 0$ and large enough 
 blocklength $l$, an $(l,k_l,\tilde{D}_l)$ merging for $\rho_1$, with
 \begin{align}
  &k_l \leq 2^{l(S(A|B,\rho_1) + \delta)}, \hspace{0.3cm}
  \tilde{D}_l \leq 2^{l(I(A;E,\rho_1) + \delta)} \label{merg_ex_prf_rate}\\
  \text{and} \hspace{0.3cm}
  &F_m(\rho_1^{\otimes l},\tilde{\cM}_l) \geq 1 - 2^{-l\tilde{c}} \label{merg_ex_prf_error}
 \end{align}
 with a constant $\tilde{c} > 0$. Define $\cM_l := \sum_{s =1}^N \cU_{s^l} \circ \tilde{\cM}_l \circ \cV_{s^l}$,
 with $\mathcal{U}_{s^l}(\cdot)
 := U_{s^l}\otimes \eins_{\hr_B^{\otimes l}}(\cdot)U_{s^l}^\ast \otimes \eins_{\hr_B^{\otimes l}}$. 
 It holds
 \begin{align*}
  &F_m(\rho_{s^l}, \cM_l) \\
  &= \sum_{s'^l\in [N]^l} F(\cU_{s'^l} \circ \tilde{\cM}_l \circ \cV_{s'^l} \otimes 
      \id_{\hr_{E}^{\otimes l}}(\psi_{s^l}), \phi_l \otimes \psi_{s^l}')\\
  &=\sum_{s'^l \in [N]^l} F(\tilde{\cM}_l \otimes \id_{\hr_{E}^{\otimes l}}(\cV_{s'^l}(\psi_{s^l})),
     \phi_l \otimes \cU_{s'^l}^\ast(\psi_{s^l}'))   \\
  &= F_m(\rho_1^{\otimes l}, \tilde{\cM}_l) \geq 1 - 2^{l\tilde{c}}.
 \end{align*}
 for each $s^l \in [N]^l$. The first equality above is by linearity of the merging fidelity in the merging operation,
 the second one is by invariance of the fidelity under unitary evolutions, the third equality is by 
 (\ref{merg_ex_prf_instr}). The last inequality is (\ref{merg_ex_prf_error}). 
 $\cM_l$ is an $(l,k_l,D_l)$-$A\rightarrow B$-merging with $D_l = N^l \cdot \tilde{D}_l$, i.e.
 \begin{align}
  \frac{1}{l}\log D_l \leq I(A;E,\rho_1) + \log N. \label{merg_ex_prf_class_2}
 \end{align}
 By properties of the set $\tilde{\cX}$, i.e. (\ref{merg_ex_prf_st}) and (\ref{merg_ex_prf_set}), and 
 the equality 
 $S(\rho_p) = \sum_{s=1}^N p(s) S(\rho_s) + H(p)$
 which holds for each $\rho_p := \sum_{s=1}^N p(s) \rho_s$, $p \in \pr([N])$, due to orthogonality of the 
 supports of the states, we infer by calculation of the entropies and maximization over $p$
 \begin{align}
  \max_{p \in \pr([N])} S(A|B,\rho_p) &= S(A|B,\rho_1) + \log N, \ \textrm{and} \label{merg_ex_prf_rate_1}\\
  \max_{p \in \pr([N])} I(A;E,\rho_p) &= I(A;E,\rho_1) + 2\log N \label{merg_ex_prf_rate_2}.
 \end{align}
 Eqns. (\ref{merg_ex_prf_rate_1}) and (\ref{merg_ex_prf_rate_2}) together with 
 (\ref{merg_ex_prf_rate}) and (\ref{merg_ex_prf_class_2}), show, that 
 \begin{align*}
  R_q = \max_{p \in \pr([N])} S(A|B,\rho_p) - \log N  
 \end{align*}
 is achievable with asymptotic classical side communication at rate 
 \begin{align*}
  R_c = \max_{p \in \pr([N])} I(A;E,\rho_p) - \log N.
 \end{align*}
\section*{Acknowledgments}
The authors are grateful to Igor Bjelakovi\'c for many stimulating discussions and valuable suggestions.
The work of H.B. is supported by the DFG via grant BO 1734/20-1 and by the BMBF via grant 16BQ1050.


\begin{thebibliography}{1}
\bibitem{ahlswede78}
R.~Ahlswede.
\newblock Elimination of Correlation in Random Codes for Arbitrarily Varying Channels. 
\newblock {\em Z. Wahr. Verw. Geb.} 44, 159--175 (1978).
 
\bibitem{ahlswede80}
R.~Ahlswede. 
\newblock Coloring Hypergraphs: A new Approach to Multi-user Source Coding II.
\newblock {\em J. Comb., Info. \& Sys. Sci.} 5, 220--268 (1980).

\bibitem{ahlswede86}
R.~Ahlswede.
\newblock Arbitrarily Varying Channels with State Sequence Known to the Sender.
\newblock {\em IEEE Trans. Inf. Th.} 32, 621--629 (1986).
 
\bibitem{ahlswede13}
R.~Ahlswede, I.~Bjelakovi\'{c}, H.~Boche, J.~N\"{o}tzel.
\newblock Quantum Capacity under adversarial quantum noise: arbitrarily varying quantum channels.
\newblock {\em Comm. Math. Phys.} vol. 317, 103--156 (2013). 

\bibitem{ahlswede07}
R.~Ahlswede, V.~Blinovsky.
\newblock Classical Capacity of Classical-Quantum Arbitrarily Varying Channels.
\newblock {\em IEEE Trans. Inf. Th.} 53, 526--533 (2007).

\bibitem{berta09}
M.~Berta.
\newblock Single-shot Quantum State Merging. Diploma Thesis, ETH Z\"urich (2008)
\newblock Available at http://arxiv.org/abs/0912.4495

 \bibitem{bjelakovic13}
I.~Bjelakovi\'{c}, H.~Boche, G.~Jan\ss en.
\newblock Universal quantum state merging.
\newblock {\em J. Math. Phys.} 54, 032204 (2013).

\bibitem{boche14}
H.~Boche, G.~Jan\ss en. 
\newblock Resource cost results for one-way entanglement distillation and state merging of 
compound and arbitrarily varying quantum sources (2014).
\newblock Available at http://arxiv.org/abs/1401.6063

\bibitem{christandl06}
M.~Christandl, G.~Mitchinson.
\newblock The Spectra of Quantum States and the Kronecker Coefficients of the Symmetric Group
\newblock {\em Comm. Math. Phys.} vol. 261, 789--797 (2006).

\bibitem{devetak05c}
I.~Devetak, A.~Winter.
\newblock Distillation of secret key and entanglement from quantum states.
\newblock {\em Proc. R. Soc.} A2005 461, 207--235 (2005).

\bibitem{horodecki07}
M.~Horodecki, J.~Oppenheim, A.~Winter.
\newblock Quantum State Merging and Negative Information.
\newblock {\em Comm. Math. Phys.} 269, 107--136 (2007).

\bibitem{keyl01}
M.~Keyl, R.F.~Werner.
\newblock Estimating the spectrum of a density operator.
\newblock {\em Phys. Rev. A} 64, 052311 (2001).
\end{thebibliography}
\end{document}